\newcounter{Kulso}
\newenvironment{pszkod}[2][1]
{
  \setcounter{Kulso}{#1}
  \sffamily
  \vspace{\topsep}              
  \noindent
  {#2}
  \begin{list}{\rmfamily\arabic{Kulso}.\@arabic\c@enumiv.}
    {\usecounter{enumiv}%
     \renewcommand\p@enumiv{\arabic{Kulso}.}
     \renewcommand\theenumiv{\@arabic\c@enumiv}
     \setlength{\parsep}{0pt}
     \setlength{\itemsep}{0pt}
     \setlength{\topsep}{0pt}
    }
  \psz@cimke{begin}
}
{
  \psz@cimke{end}
  \end{list}
  \vspace{\topsep}
}
\newlength{\tabhossz}
\newcommand{\tab}{\par
\advance\@totalleftmargin \tabhossz
\advance\linewidth -\tabhossz
\parshape \@ne \@totalleftmargin \linewidth
\addtolength{\labelsep}{\tabhossz}%
}
\newcommand{\untab}{\par
\advance\@totalleftmargin -\tabhossz   
\advance\linewidth \tabhossz
\parshape \@ne \@totalleftmargin \linewidth
\addtolength{\labelsep}{-\tabhossz}%
}
\newcommand{\psz@cimke}[1]{\item[{\makebox[\labelwidth][l]{#1}}]}
\newcommand{\Rset}{\mathbb{R}}
\newtheorem{theorem}{Theorem}
\newtheorem{problem}{Problem}
\newtheorem{lemma}{Lemma}
\newtheorem{defn}{Definition}
\newtheorem{claim}{Claim}
\newtheorem{cor}{Corollary}
\begin{document}
\title{The complexity of the Clar number problem and an FPT algorithm} 
\author{Erika R. B\'erczi-Kov\'acs\thanks{Department of Operations Research, E\"otv\"os University, P\'azm\'any P\'eter s\'et\'any 1/C, Budapest, Hungary, H-1117.  Supported by  the Hungarian Scientific Research Fund (OTKA, grant number K109240).} 
\and 
Attila Bern\'ath\thanks{MTA-ELTE Egerv\'ary Research Group,
Department of Operations Research, E\"otv\"os University, P\'azm\'any P\'eter s\'et\'any 1/C, Budapest, Hungary, H-1117.  Supported by  the Hungarian Scientific Research Fund (OTKA, grant number K109240).}
}

\maketitle

\begin{abstract}
  The Clar number of a (hydro)carbon molecule, introduced by Clar
    [E.~Clar, \emph{The aromatic sextet}, (1972).], is the maximum
    number of mutually disjoint resonant hexagons in the molecule.
    Calculating the Clar number can be formulated as an optimization
    problem on 2-connected planar graphs.  Namely, it is the maximum
    number of mutually disjoint even faces a perfect matching can
    simultaneously alternate on.  It was proved by Abeledo and
    Atkinson [H.~G. Abeledo and G.~W. Atkinson, \emph{Unimodularity of
        the clar number problem}, Linear algebra and its applications
      \textbf{420} (2007), no.~2, 441--448] that the Clar number can
    be computed in polynomial time if the plane graph has even faces
    only.  We prove that calculating the Clar number in general
  2-connected plane graphs is NP-hard.  We also prove NP-hardness of
    the maximum independent set problem for 2-connected plane graphs
    with odd faces only, which may be of independent interest.
  Finally, we give an FPT algorithm that determines the
  Clar number of a given 2-connected plane graph.  The parameter of
  the algorithm is the length of the shortest odd join in the planar
  dual graph.  For fullerenes this is not yet a polynomial algorithm,
  but for certain carbon nanotubes it gives an efficient algorithm.
\end{abstract}

\begin{quote}
{\bf Keywords: Clar number, fullerene, complexity, planar graph, graph algorithm}
\end{quote}
\vspace{5mm}

\section{Introduction}

Our research is motivated by problems in chemical graph theory. Some
molecules, for example polycyclic aromatic hydrocarbon (PAH) molecules,
benzenoid hydrocarbon molecules, or fullerene molecules can be
represented as a 2-connected plane graph. In this representation only carbon atoms are depicted, while hydrogen atoms are omitted. 
Several of the chemical properties of these molecules (e.g. chemical stability) are closely related to the parameters of the underlying graph. In this note we will be concerned with one of these parameters, the \textbf{Clar number} that we will define later.

A subclass of PAHs, the benzenoid PAHs have the special property that every
face in this plane graph is a hexagon, in particular, every face has an even
number of nodes. One can see that such a 2-connected plane graph is
also bipartite. Abeledo and Atkinson proved that the Clar number of a
2-connected bipartite plane graph can be computed in polynomial
time. On the other hand we show in this note that determining the Clar
number of a general 2-connected plane graph is NP-hard.

Fullerenes are carbon molecules with a hollow cage-like structure.
The first fullerene molecule to be discovered, and the family's
namesake, buckminsterfullerene ($C_{60}$), was prepared in 1985 by
Richard Smalley, Robert Curl, James Heath, Sean O'Brien, and Harold
Kroto at Rice University \cite{kroto1985c}. The graph representing a fullerene molecule
contains exactly 12 pentagon faces, and the rest of the faces are
hexagons (the number of hexagons can be arbitrarily large). 
For the Clar number of fullerenes Ye and Zhang gave an upper bound of $\lfloor \frac{n-12}{6}\rfloor$ \cite{zhang2007upper}.
Later they characterized the fullerenes achieving this bound \cite{ye2009extremal}.

One of our
motivations was to determine the Clar number of fullerene molecules in
polynomial time. 
We show that
determining the Clar number of a general 2-connected plane graph is
NP-hard, if the number of odd faces is not bounded in the planar
embedding\footnote{Fullerene graphs are quite special 2-connected plane graphs: for example they are 3-connected and  3-regular, too. We believe that among these properties the constant bound 12 on the number of odd faces is the most important, if we are concerned with their Clar number.}. 
We present an algorithm that determines the Clar number of a
2-connected plane graph, and has good running time, provided that the
odd faces are ``not too far from each other''. More precisely, our
algorithm is fixed parameter tractable (FPT) where the parameter is
the length of the shortest odd join in the planar dual graph. In
Section \ref{sec:alg} we explain that for a subclass of fullerenes
(called carbon nanotubes) our algorithm efficiently computes the Clar
number.


Let $G=(V,E)$ denote a $2$-connected planar graph which has a perfect
matching. For a planar embedding of $G$ and a perfect matching of $G$
let $F_M$ denote the set of those faces which alternate with respect
to $M$. Note that faces in $F_M$ are even. A pairwise vertex disjoint
subset of $F_M$ is a \textbf{Clar set with respect to $M$}. A subset
$C$ of the faces is a \textbf{Clar set} if there exists a perfect
matching $M$ for which $C$ is a Clar set with respect to $M$.  Note
that a set of pairwise vertex disjoint even faces is a Clar set if and
only if deleting all (the nodes of) these even faces the remaining
graph still has a perfect matching. The \textbf{Clar number} of $G$,
denoted by $Cl(G)$ is the maximum size of a Clar set. For sake of
simplicity we allow the unbounded face in a Clar set as well, but
there are no difficulties if we want to exclude it. By a \textbf{
  plane graph} we mean a planar graph with a fixed planar embedding.
For further graph theoretic definitions we refer the reader to \cite{frank2011connections}.

The Clar number was defined by Clar in \cite{clar}. It was proved by
Abeledo and Atkinson \cite{abeledo_minmax} that the Clar number can be
computed in polynomial time if $G$ is bipartite. Note that a plane
graph is bipartite if and only if all its faces are even. 
Our first result is the following theorem.


\begin{theorem}\label{thm:nph}
It is NP-hard to calculate the Clar number of a 2-connected planar
graph (given with a fixed planar embedding).
\end{theorem}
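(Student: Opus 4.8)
The plan is to prove NP-hardness by a polynomial-time reduction from the \emph{maximum independent set} (MIS) problem restricted to $2$-connected plane graphs all of whose faces are odd --- the companion hardness result announced in the introduction, which I take as given. Write $H$ for such an input graph and $\alpha(H)$ for its independence number. The guiding observation is that a Clar set is a packing of pairwise vertex-disjoint \emph{even} faces realizable by a single perfect matching, so ``pairwise vertex-disjoint'' already has the flavour of independence; the goal is to build a plane graph $G$ in which the only resonant faces are \emph{vertex gadgets} $f_v$, one per $v\in V(H)$, coupled so that $f_u$ and $f_v$ cannot be simultaneously resonant exactly when $uv\in E(H)$. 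I would then argue $Cl(G)=\alpha(H)$.

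For the construction I would \emph{inflate} each vertex $v$ into an even cycle $f_v$ of length $2d(v)$, whose vertices alternate between $d(v)$ \emph{ports} (one per edge incident to $v$) and $d(v)$ \emph{fillers}. For each edge $e=uv\in E(H)$ I would add a single \emph{link vertex} $\ell_e$ joined only to the port of $f_u$ and the port of $f_v$ associated with $e$, placing everything consistently with the fixed rotation system of $H$ so that $G$ is plane with an induced embedding; a routine check gives $2$-connectivity. The point of the link vertex is the coupling: since $\deg(\ell_e)=2$, in any perfect matching $\ell_e$ must be matched to one of the two ports, so if both $f_u$ and $f_v$ were resonant (hence matched internally by their alternating matchings, leaving both ports occupied) then $\ell_e$ could not be matched --- no perfect matching then has both gadgets resonant.

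Correctness would be shown in two directions. Forward: from an independent set $S\subseteq V(H)$ I would make each $f_v$ with $v\in S$ resonant (alternating internal matching) and extend this to a perfect matching of all of $G$; independence guarantees that for every link vertex $\ell_e$ with $e=uv$ at least one of $u,v$ lies outside $S$, so $\ell_e$ can be matched to the free port, and the remaining fillers and ports of the non-selected gadgets are matched along their cycles. The selected $f_v$ are pairwise vertex-disjoint (distinct gadgets meet only through link vertices, which are not on any $f_v$), so they form a Clar set and $Cl(G)\ge|S|=\alpha(H)$. Backward: given any Clar set $C$ realized by a matching $M$, a parity argument forces every face of $C$ to be a gadget $f_v$, and the coupling above forces $\{v:f_v\in C\}$ to be independent in $H$; hence $|C|\le\alpha(H)$ and equality follows.

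The combinatorial correspondence is clean, so I expect the real work --- and the main obstacle --- to be the parity and matching bookkeeping. Two points need care. First, in the forward direction I must guarantee that after fixing the internal matchings of the selected gadgets and the forced link edges, every non-selected gadget cycle decomposes into \emph{even} paths and is therefore perfectly matchable; this constrains how ports are spaced by fillers along each $f_v$ and may require padding the cycles, and it is the heart of showing a perfect matching always exists. Second, in the backward direction I must ensure that \emph{no} non-gadget face of $G$ is even, so that no unintended face can ever be resonant and inflate $Cl(G)$ beyond $\alpha(H)$; this is precisely where the hypothesis that every face of $H$ is odd is used, since the faces of $G$ other than the gadgets trace the faces of $H$ together with link vertices, and the construction must be tuned so that they inherit odd length.
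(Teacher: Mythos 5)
Your high-level strategy is the paper's: reduce from maximum independent set on $2$-connected plane graphs with only odd faces. But your gadget construction, as specified, has a fatal flaw: \emph{the graph $G$ you build has no perfect matching at all}. In each gadget $f_v$ the cycle alternates between $d(v)$ ports and $d(v)$ fillers, and every filler is adjacent only to ports; the link vertices $\ell_e$ are also adjacent only to ports. So the set of all fillers together with all link vertices is an independent set of size $2|E(H)|+|E(H)|=3|E(H)|$ whose entire neighborhood is the port set, of size $2|E(H)|$. Since in any perfect matching the vertices of an independent set must be matched into its neighborhood, no perfect matching exists. Concretely: matching the $d(v)$ fillers of $f_v$ consumes all $d(v)$ ports, so every $\ell_e$ is left with both neighbors occupied --- not only when both endpoints are selected, but always. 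Your forward direction therefore cannot be carried out, and the ``coupling'' you rely on degenerates (it forbids everything, not just adjacent pairs). You flagged the port/filler spacing as needing ``padding,'' but this is not a routine tune-up: with one filler between consecutive ports, every segment between two removed ports has odd length, so no nonempty set of link assignments is ever feasible; and uniform padding with an even number of fillers makes the gadget cycle of a vertex of odd degree an odd cycle, hence not even matchable internally and not an even face. Your second acknowledged gap is also real and goes the wrong way: a face of $H$ with $k$ edges becomes a face of $G$ of length exactly $4k$ (one link vertex per edge, plus two ports and one filler per corner), which is \emph{even} regardless of the parity of $k$. So the oddness of the faces of $H$ does not transfer as you claim, and your backward direction's premise that only gadget faces are even is false.

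The paper's construction avoids both problems by a different conflict mechanism. Instead of keeping the gadgets vertex-disjoint and encoding adjacency through a matching obstruction (your degree-$2$ link vertex), it blows a circuit into each vertex of the planar dual: for each edge $uv$ of $H$ it creates two vertices $x_{uv,F_1},x_{uv,F_2}$ joined by an edge, and joins copies of consecutive edges along each face. Then the circuit around a vertex $v$ has even length $2d(v)$, the circuit around a face of $H$ has the (odd) length of that face, and --- crucially --- the gadget circuits of adjacent vertices $u,v$ \emph{share} the two vertices $x_{uv,F_1},x_{uv,F_2}$, so vertex-disjointness of even faces is literally independence in $H$. The edges $x_{uv,F_1}x_{uv,F_2}$ form a canonical perfect matching $M$ that alternates on every even face simultaneously, so any disjoint family of even faces is automatically a Clar set with respect to this single fixed $M$, and $Cl(G')=\alpha(H)$ with no parity or matching-extension bookkeeping whatsoever. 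If you want to salvage your link-vertex design, you would have to redesign the gadgets so that a perfect matching exists for every independent set and for none larger, while simultaneously making all non-gadget faces odd; the paper's vertex-sharing trick shows this machinery is unnecessary.
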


{The proof of the above theorem will be detailed in Section
  \ref{sec:compl}, based on a reduction from a special case of the
  independent set problem.} In Section \ref{sec:alg} we present an FPT
algorithm that determines the Clar number of a 2-connected plane
graph.


\section{Hardness of the Clar number problem}
\label{sec:compl}

In this section we prove Theorem \ref{thm:nph}. Our reduction will be
based on a special case of the Independent Set Problem. Let us start
with defining this problem.

\begin{defn}
Given a graph $G=(V,E)$, a subset $U\subseteq V$ is said to be
\textbf{independent} if there is no edge of $G$ between two nodes of $U$.
Let $\alpha(G)$ denote the maximum size of an
independent set in $G$.
\end{defn}

%
%
%
%

\begin{problem}\label{prob:2connplcubic}
Given a 2-connected planar cubic graph $G$ and a positive integer $K$, does $G$
contain an independent set of size $K$?
\end{problem}

\begin{theorem}[Mohar, Theorem 4.1 in \cite{mohar2001face}]\label{thm:2connplcubic}
Problem \ref{prob:2connplcubic} is NP-complete.
\end{theorem}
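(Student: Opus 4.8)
Membership in NP is immediate: a purported independent set of size $K$ is a certificate that can be checked in polynomial time by verifying that none of its $K$ vertices are pairwise adjacent. The whole task is therefore to establish NP-hardness, and the plan is to reduce from a restriction of the independent set problem that is already known to be NP-complete on a slightly larger graph class, and then to massage the instances into \emph{2-connected, cubic, planar} ones while keeping exact track of how $\alpha$ changes. I would start from the classical fact, due to Garey, Johnson and Stockmeyer, that computing $\alpha$ is already NP-hard for planar graphs of maximum degree three.

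The reduction then has two jobs. First, \emph{regularization}: turn a planar graph of maximum degree $3$ into a genuinely $3$-regular planar graph. I would do this by attaching, at each vertex of degree $1$ or $2$, a fixed small planar gadget through the missing edges, chosen so that it is planar, raises the degree to exactly $3$, and increases $\alpha$ by a constant that does not depend on whether the attachment vertex is chosen into the independent set. Summing these constants over all attachment points yields an additive shift $c_1$ with $\alpha(G') = \alpha(G) + c_1$. Second, \emph{2-connectivization}. The key observation here is that for cubic graphs the low-end connectivity notions coincide: a connected cubic graph fails to be $2$-connected exactly when it has a bridge, since a cut vertex of degree three must isolate one of its three incident edges as a bridge (by pigeonhole two components cannot each absorb two of the three edges), and conversely an endpoint of a bridge is a cut vertex. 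Thus it suffices to eliminate bridges. I would replace each bridge $e=uv$ by a fixed $2$-connected planar ``wire'' gadget that keeps all degrees equal to three, introduces no new bridge, and again shifts $\alpha$ by a constant independent of the choice at its two endpoints. Processing bridges one at a time inside the block--cut tree, so that the gadgets do not interfere, gives a second additive constant $c_2$; setting $K' = K + c_1 + c_2$ completes the reduction.

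The hard part will be the \emph{simultaneous} design of these gadgets: a single construction must respect planarity, force every degree to be exactly three, avoid creating cut vertices, and perturb $\alpha$ by a constant that is genuinely independent of the local independent-set decision. These requirements pull against one another, since the natural way to boost connectivity (adding parallel routes) raises degrees, while the natural way to keep the graph cubic (subdivision) tends to create or preserve bridges. Getting an $\alpha$-shift that is truly invariant across all local in/out configurations is the delicate point, because otherwise the correspondence $\alpha(G') = \alpha(G) + (c_1 + c_2)$ breaks.

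A cleaner route that sidesteps bridge surgery altogether is to arrange the base reduction so that it already outputs a $3$-connected cubic planar graph. Since a $3$-connected graph is in particular $2$-connected, NP-hardness on that smaller class is inherited verbatim by the larger class of $2$-connected cubic planar graphs (the instances are unchanged and the yes/no answer is the same), while membership in NP is unaffected. Under this route the crux becomes verifying that a standard planar-$3$-SAT-based construction can be tuned to be $3$-connected, which I expect to be more transparent than engineering an $\alpha$-preserving bridge-elimination gadget directly.
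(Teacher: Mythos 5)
You should first note that the paper itself contains no proof of this statement: it is imported wholesale from Mohar \cite{mohar2001face} (Theorem 4.1), and the authors use it as a black box in the proof of Lemma \ref{indep}. So there is no internal argument to match your attempt against; your proposal has to stand on its own, and as written it does not. The NP-membership observation is fine, and your pigeonhole argument that in a connected cubic graph cut vertices and bridges come together is correct. But the two gadget constructions that carry the entire reduction are never constructed, and the bridge-surgery step is not merely unfinished --- as described it provably cannot work. In a cubic graph, once you delete a bridge $uv$, the endpoints $u$ and $v$ each have exactly one free slot; hence any ``wire'' gadget attached only at $u$ and $v$ is joined to the rest of the graph by two single edges, and since the two sides of the former bridge communicate only through the gadget, each of those two attaching edges is itself a bridge. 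The operation therefore only re-subdivides the bridge (through the gadget) and never eliminates it. Genuinely destroying the bridge requires a second vertex-disjoint route between the two sides, which forces you either to attach the gadget at additional vertices (breaking $3$-regularity) or to rebuild the neighbourhoods of $u$ and $v$ wholesale --- precisely the ``delicate'' engineering your proposal defers and never supplies. The same applies to the regularization gadgets: the claim that each one shifts $\alpha$ by a constant \emph{independent of whether the attachment vertex is in the independent set} is exactly the nontrivial invariance property that must be verified configuration by configuration, and you flag it as hard without resolving it.

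Your fallback route --- arrange for a planar-$3$-SAT-based construction to output a $3$-connected cubic planar graph directly --- is logically sound as a strategy (hardness on a subclass is inherited by the superclass), but it is stated as an expectation, not an argument: no construction is given and no $3$-connectivity verification is attempted, so it simply relocates the missing work. As it stands, the proposal establishes only NP-membership plus the already known hardness for planar graphs of maximum degree three; the passage to $2$-connected cubic planar instances, which is the actual content of Theorem \ref{thm:2connplcubic} and the reason the paper cites Mohar rather than a textbook, is a genuine gap.
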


\begin{problem}\label{prob:2connplodd}
Given a 2-connected plane graph $G$ with odd faces only, and a positive integer $K$, does $G$
contain an independent set of size $K$?
\end{problem}

\begin{lemma}\label{indep}
Problem \ref{prob:2connplodd} is NP-hard.
\end{lemma}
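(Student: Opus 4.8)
The plan is to reduce from Problem \ref{prob:2connplcubic}, which is NP-complete by Theorem \ref{thm:2connplcubic}. Given a 2-connected planar cubic graph $G$ with a fixed embedding, I will transform it into a 2-connected plane graph $G'$ all of whose faces are odd, in such a way that $\alpha(G')=\alpha(G)+s$, where $s$ is the (easily computed) number of even faces of $G$; then $G$ has an independent set of size $K$ if and only if $G'$ has one of size $K+s$, which gives the reduction.

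The heart of the construction is a local \emph{parity-flipping gadget} that replaces a single edge $uv$. I delete $uv$ and insert three new vertices $p,a,b$ with edges $up,pv$ (a length-$2$ path drawn on one side of the former edge), $ua,ab,bv$ (a length-$3$ path drawn on the other side), and the two chords $pa,pb$. The bounded region then splits into the three triangles $upa$, $apb$, $pvb$, so every new internal face is odd. On the $p$-side the face formerly incident to $uv$ now runs along $u\,p\,v$, increasing its length by $1$, so its parity flips; on the other side the face now runs along $u\,a\,b\,v$, increasing its length by $2$, so its parity is unchanged. Thus the gadget flips the parity of exactly one prescribed incident face and leaves the other untouched while creating only odd faces, and since it contains the two internally disjoint $u$--$v$ paths $u\,p\,v$ and $u\,a\,b\,v$, both planarity and $2$-connectivity are preserved.

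The key point, and the part I expect to be the main obstacle, is that this gadget is \emph{independence-clean}: inserting it raises the independence number by exactly $1$, irrespective of the surrounding graph. To verify this, let $h(xy)$ denote the largest number of gadget-interior vertices usable in an independent set when $u,v$ are forced to status $x,y\in\{0,1\}$ (status $1$ forbidding the interior neighbours of that terminal). A short case check gives $h(00)=h(10)=h(01)=1$ and $h(11)=0$: the three interior vertices $p,a,b$ are pairwise adjacent, so at most one is ever usable, and exactly one is usable unless both $u$ and $v$ are taken. For any host graph $H$ containing the edge $uv$, writing $\beta(xy)$ for the maximum independent set of $H-uv$ with $u,v$ in status $xy$, the modified graph $H'$ satisfies $\alpha(H')=\max_{xy}\bigl(\beta(xy)+h(xy)\bigr)$, while $\alpha(H)=\max_{xy\neq 11}\beta(xy)$. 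Substituting the $h$-values and using $\beta(11)\le\alpha(H)+1$ (delete one of $u,v$ from a witnessing set) yields $\alpha(H')=\alpha(H)+1$ exactly. This mirrors the familiar fact that subdividing an edge twice raises $\alpha$ by one; the subtlety here is that the extra chords are arranged precisely so that the parity flip costs nothing in independence.

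Finally I assemble the reduction. I choose for each even face $f$ of $G$ a boundary edge $e_f$, with the $e_f$ pairwise distinct; such a system of distinct representatives exists by Hall's condition, since every even face has at least four boundary edges and each edge lies on at most two faces. I replace each $e_f$ by a parity-flipping gadget oriented so that its flipped face is $f$, and apply the clean-insertion identity once per gadget to obtain $\alpha(G')=\alpha(G)+s$. Every even face is flipped once by its own gadget and otherwise only ever receives length increases of $2$ from gadgets sitting on neighbouring even faces, so it becomes odd; every originally odd face only ever receives increases of $2$ and stays odd; and all internal gadget faces are triangles. Hence $G'$ is a $2$-connected plane graph with odd faces only, and the transformation is polynomial, completing the reduction. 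The most delicate parts to write carefully are the verification that the gadget is independence-clean and the parity bookkeeping showing that \emph{every} face of $G'$ ends up odd.
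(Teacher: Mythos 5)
Your proof is correct, and the reduction skeleton (reduce from the NP-complete independent set problem on 2-connected planar cubic graphs, eliminate the even faces one by one while raising $\alpha$ by exactly one per face) matches the paper's; but your gadget and the surrounding bookkeeping are genuinely different. The paper's gadget is purely additive and lives entirely inside the even face $F$: it keeps the edge $uv$ and adds a triangle on new vertices $a,b,c$ together with edges $au,bu,bv$, so $F$ gains length $3$ and becomes odd, the three new faces are triangles, and \emph{no other face of the embedding is touched}. This makes the independence accounting a two-liner ($c$ is adjacent only to $a$ and $b$, so $I\cup\{c\}$ is independent; conversely $\{a,b,c\}$ induces a triangle, so at most one of the new vertices is ever used), and it lets the even faces be processed consecutively with zero interaction. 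Your edge-replacement gadget deletes $uv$, which is exactly why you need the finer case analysis $h(00)=h(10)=h(01)=1$, $h(11)=0$ together with $\beta(11)\le\alpha(H)+1$ (the chords $pa,pb$ re-impose the cost of taking both $u$ and $v$ once the edge is gone), and it perturbs both faces incident to the chosen edge, which is what forces your Hall/SDR argument and the global parity bookkeeping. All of that extra machinery is sound --- the identity $\alpha(H')=\max_{xy}\bigl(\beta(xy)+h(xy)\bigr)$ is valid, and Hall's condition holds as you state since even faces of a $2$-connected simple plane graph have length at least $4$ while each edge borders at most two faces --- but it is avoidable: even within your own construction the SDR is unnecessary, because your gadget reduces the number of even faces by one, so you could simply apply it repeatedly, choosing at each step an arbitrary boundary edge of an arbitrary remaining even face, exactly as the paper iterates its interior gadget. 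In short, both reductions work in polynomial time from the same source problem; the paper's interior gadget buys locality and a near-trivial correctness proof, while your parity-flipping edge gadget buys nothing additional here but is an independently interesting device, since it flips the parity of one prescribed side of an edge at zero cost in the independence number.
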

\begin{proof}
{According to Theorem \ref{thm:2connplcubic}, the independent set problem is also NP-complete for 2-connected planar graphs.}
Let $G=(V,E)$ denote an instance of
this problem{, and let us fix a planar embedding of $G$}.  
If $G$ has an even face $F$, let $G_F$ denote the
planar graph obtained from $G$ by the following operation. We add
three vertices $a, b, c$ inside $F$ and edges $ab, bc, ca, au, bu, bv$
where $u$ and $v$ form an edge of $F$ (see Figure \ref{fig_odd}).
\begin{figure}[!t]
\begin{center}
\input{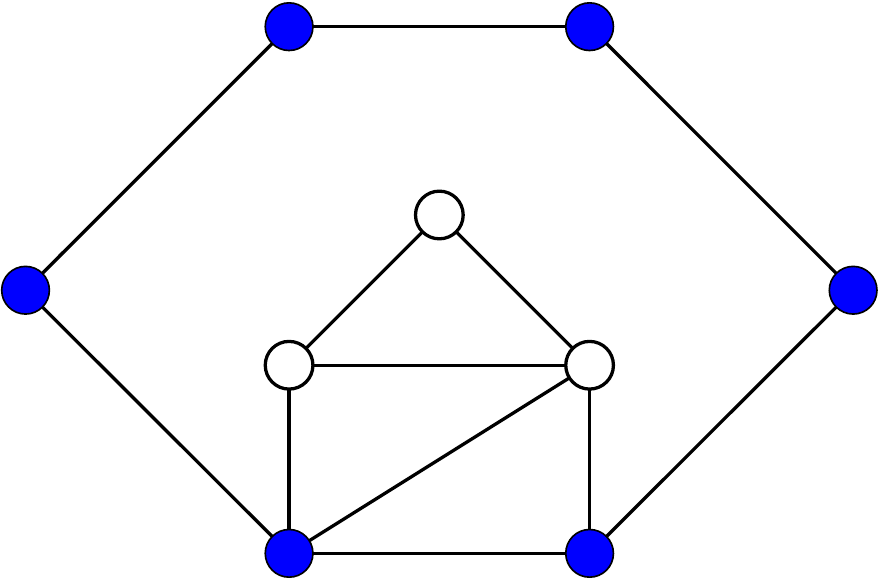_t}
\caption{Eliminating even faces.}
\label{fig_odd}
\end{center}
\end{figure}

\begin{claim}
$\alpha(G_F)=1+\alpha(G)$.
\end{claim}
\begin{proof}
First, for an independent set $I$ of $G$, clearly $I\cup c$ is independent in $G_F$ and hence $\alpha(G_F)\geq 1+\alpha(G)$. 
Second, an independent set $I_F$ in $G_F$ can contain at most one vertex from the set $\{a, b, c\}$. Since $I_F\setminus \{a, b, c\}$ is independent in $G$ we get that $\alpha(G)\geq \alpha(G_F)-1$. 
\end{proof}
Note that the number of even faces of $G_F$ is one less than that of $G${, and $G_F$ is also 2-connected}. Let $\mathbb{F}$ denote the set of even faces of $G$. By consecutively applying the above operation on every member of $\mathbb{F}$ we get another graph $G_{\mathbb{F}}$ for which $\alpha(G_{\mathbb{F}})=\alpha(G)+|\mathbb{F}|$ and which has odd faces only. 
\end{proof}

We are now ready to prove the hardness of the Clar number problem.

\begin{proof}[Proof of Theorem \ref{thm:nph}]
We prove the theorem by reducing Problem \ref{prob:2connplodd} to the
Clar number problem.  Let $G=(V,E)$ denote an instance of this
problem. We construct graph $G'$ the following way: for every edge of
$G$ we add two vertices to $G'$. Let $uv\in E$ be an edge of $G$ and
let $F_1$ and $F_2$ denote the faces $uv$ is incident to. We add
vertices $x_{uv,F_1}$ and $x_{uv,F_2}$ to $G'$ along with the edge
$x_{uv,F_1}x_{uv,F_2}$. If edges $uv$ and $vw$ are neighbouring edges
on a face $F$, then we add edge $x_{uv,F}x_{vw,F}$ to $G'$. It is easy
to see that $G'$ is planar (see Figure \ref{fig_dual}). Informally,
$G'$ is obtained from the planar dual graph $G^*$ of $G$ by ``blowing
a circuit'' into each vertex of $G^*$. Every face of $G'$ either
corresponds to a face of $G$, or a vertex of $G$, and since $G$ has odd
faces only, all the even faces of $G'$ are the ones corresponding
to vertices of $G$. Note that $G'$ trivially has a perfect matching
$M$ consisting of the edges of the form $x_{uv,F_1}x_{uv,F_2},$ for every
$uv \in E$. Since $M$ is alternating on every even face of $G'$,
corresponding to a vertex of $G$, that is, on every even face of $G'$,
for this graph the Clar number equals the maximum size of a Clar set
with respect to $M$. The Clar sets of $G'$ and the independent sets of
$G$ have a one to one correspondence, proving the theorem.
\begin{figure}[!t]
\begin{center}
\input{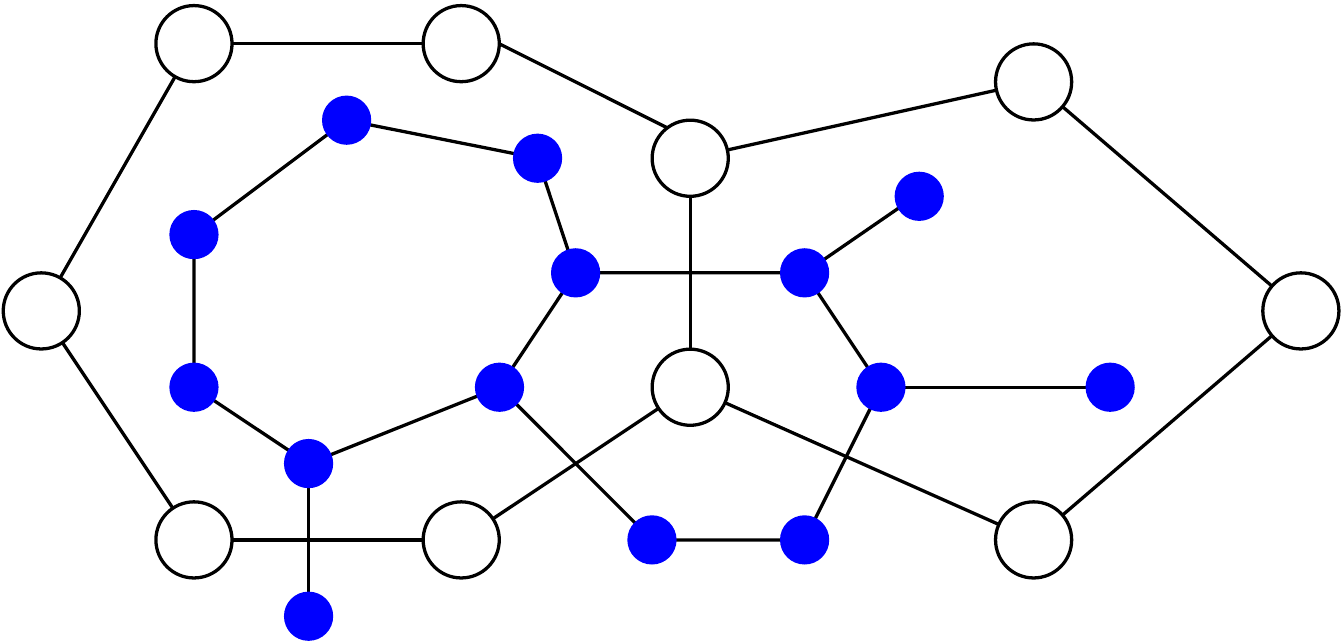_t}
\caption{Reduction of the Independent Set Problem to the Clar number problem}
\label{fig_dual}
\end{center}
\end{figure}
\end{proof}

\begin{cor}
It is also NP-hard to find a maximum cardinality Clar set with respect to a fixed perfect matching. 
\end{cor}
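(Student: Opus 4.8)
The plan is to observe that the reduction already constructed in the proof of Theorem~\ref{thm:nph} yields this corollary almost for free, since that reduction outputs not merely a graph $G'$ but a canonical perfect matching $M$ alongside it. So I would reuse exactly the same construction: starting from an instance $G$ of Problem~\ref{prob:2connplodd}, build $G'$ by blowing a circuit into each vertex of the planar dual $G^*$, and let $M$ be the set of edges $x_{uv,F_1}x_{uv,F_2}$ ranging over all $uv\in E$.

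The key step is the same observation that already drives the theorem: the matching $M$ alternates on \emph{every} even face of $G'$. Indeed, the only even faces of $G'$ are those corresponding to vertices of $G$, and the face attached to a vertex $v$ is a cycle whose edges alternate between ``dual'' edges $x_{uv,F_1}x_{uv,F_2}\in M$ and ``circuit'' edges $x_{uv,F}x_{vw,F}\notin M$. Since by definition any Clar set is a vertex-disjoint family of even faces, and $M$ alternates on all of them, every Clar set of $G'$ is in fact already a Clar set \emph{with respect to $M$}. Hence the maximum cardinality of a Clar set with respect to the fixed matching $M$ coincides with $Cl(G')$, which by the bijection between Clar sets and independent sets established in the proof of Theorem~\ref{thm:nph} equals $\alpha(G)$.

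It then follows that any procedure returning a maximum cardinality Clar set with respect to the fixed $M$ also computes $\alpha(G)$; comparing this number to the target $K$ decides Problem~\ref{prob:2connplodd}, which is NP-hard by Lemma~\ref{indep}. Therefore finding a maximum cardinality Clar set with respect to a fixed perfect matching is NP-hard.

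I do not expect a genuine obstacle here; the single point that must be made explicit, and which is only implicit in the theorem's proof, is that fixing $M$ costs nothing --- that is, restricting from arbitrary Clar sets to Clar sets with respect to $M$ does not lower the optimum. This is precisely guaranteed by the fact that $M$ alternates on \emph{all} even faces of $G'$, so I would isolate and state that fact rather than leave it buried in the construction.
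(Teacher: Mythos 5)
Your proposal is correct and matches the paper's own (implicit) argument: the paper derives this corollary directly from the reduction in the proof of Theorem~\ref{thm:nph}, where it is already observed that $M$ alternates on every even face of $G'$, so the Clar number of $G'$ equals the maximum size of a Clar set with respect to the fixed matching $M$. Your only addition is to state explicitly the fact the paper leaves embedded in that proof, which is a fair presentational choice but not a different route.
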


\section{An FPT algorithm for determining the Clar number}
\label{sec:alg}

In this section we present an algorithm that determines the Clar
number of a 2-connected plane graph, and has a good running time,
unless the odd faces are ``far from each other'' in the planar
representation.  The idea is the following. Consider a 2-connected
plane graph that has only 2 odd faces in its (fixed) planar
representation, and take a shortest path (in the planar dual graph)
between these odd faces. An optimal Clar set might use some of the
even faces that lie on this shortest path. Our algorithm takes an
arbitrary subset of even faces along this shortest path and tries to
extend this subset into a Clar set. This is repeated for every
possible subset of even faces along the shortest path. We will
generalize this for plane graphs having more than 2 odd faces
below. First we need a definition and a theorem.



\begin{defn}
Given a graph $G=(V, E)$ and a subset $T\subseteq V$ of even size, a
\textbf{$T$-join} is a subset of edges $J\subseteq E$ so that the number of
edges of $J$ incident to a node $v\in V$ is odd if and only if $v\in T$. An \textbf{odd-join} of $G$ is a $T$-join where $T$ is the set of nodes having odd degree in $G$.
\end{defn}

\begin{theorem}[See e.g. \cite{lexbook}, Chapter 29 ]
Given a graph $G=(V, E)$, a subset $T\subseteq V$ of even size, and edge-lengths $c:E\to \Rset_+$, a shortest $T$-join can be found in polynomial time.
\end{theorem}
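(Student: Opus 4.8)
The plan is to reduce the shortest $T$-join problem to minimum-weight perfect matching, for which a polynomial-time procedure (Edmonds' blossom algorithm) is available. Since the lengths are nonnegative, I would first compute, using Dijkstra's algorithm from each vertex of $T$, the shortest-path distance $d(u,v)$ between every pair $u,v\in T$; this is clearly polynomial. Then I would form the complete graph $H$ on vertex set $T$ with edge weights $d(u,v)$ and compute a minimum-weight perfect matching $M$ in $H$. Such a matching exists because $|T|$ is even. The claim to establish is that the weight of $M$ equals the minimum weight of a $T$-join in $G$, and that an optimal $T$-join can be recovered from $M$.

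For the upper bound (a $T$-join of weight at most $c(M)$ exists), I would take, for each matched pair $\{u,v\}\in M$, a shortest $u$--$v$ path $P_{uv}$ realizing $d(u,v)$, and let $J$ be the symmetric difference of all these edge sets. Taking symmetric differences adds vertex degrees modulo $2$, and each path $P_{uv}$ flips the degree parity exactly at its two endpoints $u$ and $v$; hence the degree of a vertex in $(V,J)$ is odd precisely when it lies in $T$, so $J$ is a $T$-join. Moreover $J$ is contained in the union of the chosen paths, so $c(J)\le\sum_{\{u,v\}\in M}d(u,v)=c(M)$ by nonnegativity of $c$.

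For the reverse inequality I would start from an arbitrary $T$-join $J$ and invoke a decomposition lemma: the edge set of $J$ partitions into paths, each joining two distinct vertices of $T$, together with some cycles. This follows from a parity/Eulerian argument — repeatedly peel off a trail between two odd-degree (hence $T$-) vertices, and peel off cycles from the remaining even subgraph. Discarding the cycles only lowers the weight, since $c\ge 0$, and leaves exactly $|T|/2$ paths whose endpoints constitute a perfect matching $M'$ of $T$. As each such path has length at least the shortest-path distance between its endpoints, $c(M')=\sum d\le c(J)$, so the minimum matching weight in $H$ is at most $c(J)$. Combining the two directions shows the optima coincide, and tracing the construction recovers an optimal $T$-join from $M$ in polynomial time.

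The main obstacle is the decomposition lemma underlying the lower bound: one must check that an \emph{arbitrary} $T$-join (not merely an inclusion-minimal one) splits into $|T|/2$ paths pairing up the vertices of $T$, plus cycles, so that the induced pairing is a genuine perfect matching of $T$. The remaining ingredients — nonnegativity justifying Dijkstra, existence of a perfect matching because $|T|$ is even, and polynomiality of weighted matching — are standard, so the entire content of the theorem lies in this structural correspondence between $T$-joins of $G$ and perfect matchings of $T$ in the shortest-path metric.
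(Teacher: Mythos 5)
Your argument is correct and is precisely the classical Edmonds--Johnson reduction to minimum-weight perfect matching; the paper gives no proof of this statement but cites Schrijver's book, Chapter 29, where exactly this argument appears, so your route coincides with the paper's intended one. You also handle the one delicate point properly: an arbitrary $T$-join decomposes into $|T|/2$ paths pairing up the vertices of $T$ (each vertex of $T$ is an endpoint exactly once, since removing a path flips degree parity only at its two ends) plus cycles, which together with $c\ge 0$ yields the lower bound.
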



Given a 2-connected plane graph $G=(V,E)$, let $G^*=(V^*, E^*)$ denote
its planar dual. Let $T\subseteq V^*$ be the set of odd-degree nodes
of $G^*$. Let $J^*\subseteq E^*$ be a shortest $T$-join in $G^*$,
where each edge of $G^*$ has length 1 (we refer the reader to \cite[Chapter 29]{lexbook} for preliminaries on $T$-joins). We give an algorithm
determining the Clar number of $G$ that runs in $O(3^{|J^*|}p(|V|))$ for some polynomial $p$.

\newcommand{\Feven}{\ensuremath{F_{even}}}

Let $J\subseteq E$ be the set of edges corresponding to
  $J^*$. Let $\Feven$ be the set of even faces of $G$ (that is,
  $|\Feven|= |V^*|-|T|$) and let $F_J\subseteq \Feven $ be the set of
  even faces that have some edge of $J$ in their boundary.
 Let $G'=(V+U, E - J + J')$ be the 2-connected bipartite plane
  graph that is obtained from $G$ by subdividing each edge of $J$ with
  a new node (the set of these subdivision nodes is $U$, the set of
  subdivided edges is $J'$: observe that $|U|=|J|$ and $|J'|=2|J|$). Note that $G'$ is indeed bipartite, since every face is even in its planar embedding. 

Let $K'$ be the node-edge incidence matrix of $G'$, and $R'$ be the
node-face incidence matrix of $G'$. Let $R$ be obtained from $R'$ by
deleting the columns corresponding to odd faces of $G$, and let $K$ be
obtained from $K'$ by deleting the columns corresponding to $J'$. Since the matrix $[R',K']$ is unimodular by Theorem  3.5 of \cite{abeledo2007unimodularity} we get the following claim.

\begin{claim} \label{cl:unimod}
The matrix $[R,K]$ is unimodular.
\end{claim}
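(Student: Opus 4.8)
The plan is to recognize that $[R,K]$ is simply a column-submatrix of $[R',K']$ and then to invoke the closure of unimodularity under deletion of columns, so that the claim falls out of Theorem 3.5 of \cite{abeledo2007unimodularity}. Both $R'$ and $K'$ have their rows indexed by the common node set $V\cup U$ of $G'$, so $[R',K']$ is a single matrix on that row set whose columns are the faces of $G'$ followed by the edges of $G'$. Passing from $R'$ to $R$ only discards the columns indexed by the odd faces of $G$, and passing from $K'$ to $K$ only discards the columns indexed by $J'$. Hence $[R,K]$ arises from $[R',K']$ by deleting a subset of its columns while leaving the row set untouched.

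First I would make the column bookkeeping precise. Subdividing the edges of $J$ does not change the faces (as regions) of the embedding, so the faces of $G'$ are in natural bijection with those of $G$; this is what lets ``the odd faces of $G$'' name a well-defined set of columns of $R'$, and I would confirm that $R$ retains exactly the columns of the even faces. Likewise, the edge set of $G'$ is $(E-J)\cup J'$, so deleting the $2|J|$ columns indexed by $J'$ leaves $K$ with exactly the columns of $E-J$. Once this is set up, every square submatrix of $[R,K]$ is literally a square submatrix of $[R',K']$, since the two matrices share rows and the columns of $[R,K]$ form a subcollection of the columns of $[R',K']$. I would then apply the standard fact (see e.g.\ \cite{lexbook}) that total unimodularity is preserved under taking submatrices: if every square submatrix of $[R',K']$ has determinant in $\{0,\pm1\}$, the same holds a fortiori for every square submatrix of $[R,K]$.

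The step that needs the most care is not the submatrix argument itself but pinning down the precise meaning of ``unimodular.'' Under the total-unimodularity reading --- the one relevant for the integrality of the Clar-number polyhedron --- no rank condition is needed and the claim is immediate from submatrix closure. If instead one reads ``unimodular'' in the narrower sense of a full-row-rank integer matrix all of whose maximal minors are $\pm1$, then deleting columns is harmless only as long as the row rank is not lowered; in that reading I would additionally verify that the retained columns (the even faces of $G$ together with the edges $E-J$) still span the row space of $[R',K']$, so that every basis of $[R,K]$ is also a basis of $[R',K']$ and therefore has determinant $\pm1$. Either way the conclusion follows, with the only genuine work being the clean identification of $[R,K]$ as a column-submatrix of $[R',K']$.
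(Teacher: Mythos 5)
Your skeleton --- $[R,K]$ is a column-submatrix of $[R',K']$, and unimodularity survives column deletion --- is exactly what the paper intends (its entire proof is the one sentence preceding the claim). The genuine gap is that both readings of ``unimodular'' that you actually verify are unavailable here, and the reading that is needed is the one you leave untreated. First, $[R',K']$ is \emph{not} totally unimodular in general: a node-face incidence matrix of a $2$-connected bipartite plane graph can contain a $3\times 3$ submatrix with rows $(1,0,1)$, $(1,1,0)$, $(0,1,1)$, of determinant $\pm 2$ --- in the cube, take the vertices $000$, $011$, $101$ and the three facial $4$-cycles $x=0$, $z=1$, $y=0$. So the branch you call ``immediate'' starts from a false premise; Theorem 3.5 of \cite{abeledo2007unimodularity} asserts unimodularity precisely because total unimodularity fails. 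Second, the full-row-rank reading fails as well, and the verification you defer (``the retained columns still span the row space'') cannot succeed: already $[R',K']$ is rank-deficient, since the $\pm 1$ bipartition vector of the bipartite graph $G'$ annihilates every edge column and every (even-cycle) face column; worse, $[R,K]$ can contain an all-zero row, namely the row of a subdivision node on an edge $e\in J$ whose two sides are both odd faces (exactly the adjacent-pentagon-pair situation in the nanotube application), since every column incident to that node has been deleted.

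What the claim needs is the rank-based notion used by Abeledo and Atkinson and implicitly by Lemma \ref{lem:intpoly}: an integer matrix $A$ of rank $r$ is unimodular when for every set of $r$ linearly independent columns the gcd of the corresponding $r\times r$ minors is $1$; equivalently, $\{x\ge 0: Ax=b\}$ is an integral polyhedron for every integral $b$ for which it is nonempty. Under this definition column deletion is still harmless, but it requires an argument, because the rank may drop. Two ways to close the gap: (i) polyhedrally, deleting a column amounts to intersecting with the face $x_e=0$, and faces of integral polyhedra are integral, so the equivalent integrality property is inherited by $[R,K]$; or (ii) directly, if a prime $p$ divided all maximal minors of some linearly independent set of remaining columns, then those columns would be dependent modulo $p$, and extending them to $r$ linearly independent columns of $[R',K']$ would force all $r\times r$ minors of that larger set to be divisible by $p$, contradicting the unimodularity of $[R',K']$. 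With either supplement your submatrix argument is restored, so the defect is in the justification rather than the statement; but as written, both branches of your case analysis rest on properties that $[R',K']$ does not have.
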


After these preliminaries we present the pseudocode of our algorithm
that calculates the Clar number of a 2-connected plane graph. The
basic idea of the algorithm is the following. Determining the Clar
number of $G$ means that we want to choose pairwise node disjoint even
faces and edges, so that every node is contained in exactly one of the
chosen objects, and we want to maximize the number of faces
chosen. Given a feasible solution consisting of a set $F_1$ of even
faces of $G$ and a set $E_1$ of edges of $G$, let $F_1'$ be the set of
even faces of $G'$ corresponding to faces in $F_1$, and similarly let
$E_1'$ be set of edges of $G'$ corresponding to edges in $E_1$ (if an
edge $e\in J$ is in $E_1$ then we add both edges obtained from the
subdivision of $e$ into $E_1'$). Every subdivision node (that is, node
in $U$) is then incident to either $0, 1$ or $2$ objects in $F_1'\cup
E_1'$. If someone tells us these $0,1,2$ values for every $u\in U$
then we can reconstruct $E_1$ and $F_1$ using these numbers, see Lemma
\ref{lem:intpoly} below. Therefore what we do is that we try every
possible vector in $\{0,1,2\}^U$ to find the one giving the best
solution.



\begin{pszkod}{Algorithm Clar\_Number}

\item[] INPUT: a 2-connected plane graph $G=(V,E)$

\item[] OUTPUT: the Clar number of $G$

\item Find a shortest odd-join in $G^*$,
where each edge of $G^*$ has length 1
($G^*$ is the planar dual of $G$, and we will use more  notations that were introduced above in this section).




\item For every vector $b_U\in \{0,1,2\}^U$ \label{st:iter}

\tab

  \item Let $b=\left(
    \begin{array}{c}
    1_V\\
    b_U
    \end{array}
    \right)\in \{0,1,2\}^{V+U}$. 

  \item For every $e\in J'$
    \tab

      \item Let $z_e = 1$  if $e$ is incident with a node $u\in U$ with $b_U(u) = 2$, and let $z_e = 0$ otherwise.

    \untab

  \item \label{st:LP} Take the integer optimum of the LP Problem
    \eqref{eq:LP1}-\eqref{eq:LPut} (see Lemma \ref{lem:intpoly}).
  \begin{eqnarray}
    \max\{1y: y\in \Rset_+^{\Feven}, x\in \Rset_+^{E-J+J'},\label{eq:LP1}\\
    Ry + K' x = b, 
    x_e=z_e\mbox{ for every  }e\in J'.\label{eq:LPut}
  \}
  \end{eqnarray}

\untab

\item Output the best of the candidates obtained in Step \ref{st:LP}.

\end{pszkod}

\begin{lemma}\label{lem:intpoly}
The LP Problem \eqref{eq:LP1}-\eqref{eq:LPut} has an integer optimum.
\end{lemma}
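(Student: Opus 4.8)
The plan is to use the fixing constraints $x_e=z_e$ ($e\in J'$) to eliminate the subdivision--edge variables, thereby collapsing the constraint matrix $[R,K']$ onto the matrix $[R,K]$ that is unimodular by Claim \ref{cl:unimod}, and then to read off integrality from the standard theorem on unimodular systems with integral right--hand side.

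First I would split the edge variables as $x=(x_{E-J},x_{J'})$ along the partition of the edge set of $G'$ into the unsubdivided edges $E-J$ and the subdivision edges $J'$. Since $K$ is by definition the submatrix of $K'$ on the columns indexed by $E-J$, writing $K'_{J'}$ for the remaining columns we have $K'x=Kx_{E-J}+K'_{J'}x_{J'}$. The constraints $x_e=z_e$ force $x_{J'}=z$, so the equality system $Ry+K'x=b$ becomes
\[
  Ry+Kx_{E-J}=b-K'_{J'}z=:b'.
\]
As $b\in\{0,1,2\}^{V+U}$, $K'$ is a $0/1$ matrix and $z\in\{0,1\}^{J'}$, the vector $b'$ is integral, and the problem \eqref{eq:LP1}--\eqref{eq:LPut} is equivalent to
\[
  \max\{\,1y:\ y\in\Rset_+^{\Feven},\ x_{E-J}\in\Rset_+^{E-J},\ [R,K]\binom{y}{x_{E-J}}=b'\,\}.
\]

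Next I would invoke Claim \ref{cl:unimod}, by which $[R,K]$ is unimodular. Combined with the integrality of $b'$, the classical integrality theorem for unimodular systems (see \cite{lexbook}) shows that the polyhedron $\{(y,x_{E-J})\ge 0:\ [R,K]\binom{y}{x_{E-J}}=b'\}$ is integral, that is, every vertex of it is an integer vector. It remains to check that the optimum is attained: in the row indexed by a node $w$ the equation reads $\sum_{F\ni w}y_F+\sum_{e\ni w}x_e=b'(w)\le 2$ with all terms nonnegative, so on a nonempty feasible region every coordinate is bounded by $2$; hence that region is a polytope and the maximum is attained at a vertex, which is integral by the above. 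This proves the lemma.

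The step I expect to be the main obstacle is the bookkeeping of this substitution: one has to verify carefully that fixing the $J'$--columns of $[R,K']$ reproduces exactly the matrix $[R,K]$ of Claim \ref{cl:unimod} and leaves an integral right--hand side $b'$. Once that is in place the integrality of the optimum is immediate from unimodularity, and only the routine boundedness argument guaranteeing attainment remains.
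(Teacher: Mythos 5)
Your proof is correct and takes essentially the same route as the paper's, which also eliminates the variables $x_e$ for $e\in J'$ to reduce \eqref{eq:LP1}--\eqref{eq:LPut} to $\max\{1y:\ y\in\Rset_+^{\Feven},\ x\in\Rset_+^{E-J},\ Ry+Kx=b'\}$ and then invokes Claim \ref{cl:unimod}. The extra details you supply---the integrality of $b'=b-K'_{J'}z$ and the boundedness argument guaranteeing attainment at a vertex---are left implicit in the paper but are accurate.
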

\begin{proof}
After eliminating the variables $x_e$ for $e\in J'$ we obtain an LP
Problem of the form $\max\{1y: y\in \Rset_+^{\Feven}, x\in
\Rset_+^{E-J}, Ry + K x = b'\}$. The polyhedron in this problem is
integral by Claim \ref{cl:unimod}.
\end{proof}

Note that the  LP Problem \eqref{eq:LP1}-\eqref{eq:LPut} will not necessarily be feasible for every choice of $b_U$. We could be more careful in choosing only those vectors in Step \ref{st:iter} of the algorithm that make the LP feasible. However the algorithm is easier described this way. The running time is clearly $O(3^{|J^*|}p(|V|))$ for some polynomial $p$.

Carbon nanotubes are fullerenes with a cylindrical nanostructure, with two 'half-fullerene' caps on both ends. Six pentagonal faces are in both caps, forming three short pairs in the odd join. So for this class of fullerenes the parameter of our FPT algorithm is relatively small, giving an efficient method to determine the Clar number.

\section{Open questions}
We have proved the NP-hardness of the Clar number problem for general plane
graph $G$.  The problem is motivated by the problem of determining the Clar number of fullerene graphs, when  $G$ has exactly
twelve pentagonal faces and every other face is a hexagon. This problem is
however left open, since our NP-hardness
reduction involves creating a lot of odd faces. An FPT algorithm with the number of odd faces as parameter would yield a polynomial time algorithm for all fullerenes.

Another line of research would be to show that determining the Clar
number is NP-hard even for some restricted class of 2-connected
graphs, too. If we were able to specialize the Independent Set problem
further to $3$-regular plane graphs with odd faces, then our
techniques would yield that the Clar number is NP-hard for graphs
with only hexagonal even faces.

\bibliographystyle{amsplain}
\bibliography{bclar}

 



\end{document}